\newtheorem{theorem}{Theorem}
\newtheorem{lemma}{Lemma}
\newcommand{\be}{\begin{equation}}
\newcommand{\ee}{\end{equation}}
\newcommand{\cJ}{\mathcal{J}}
\newcommand{\cV}{\mathcal{V}}
\newcommand{\cU}{\mathcal{U}}
\def\nn{\nonumber}
\def\bea{\begin{eqnarray}}
\def\eea{\end{eqnarray}}
\def\nn{\nonumber}
\def\ba{\begin{array}}
\def\ea{\end{array}}
\def\nn{\nonumber}
\begin{document}

\title{Online Joint Placement and Allocation of Virtual Network Functions with Heterogeneous Servers}

\author{\IEEEauthorblockN{
Yicheng Xu\IEEEauthorrefmark{1} Vincent Chau\IEEEauthorrefmark{2} Chenchen Wu\IEEEauthorrefmark{3} Yong Zhang\IEEEauthorrefmark{4} Yifei Zou\IEEEauthorrefmark{5}}\\
\IEEEauthorblockA{\IEEEauthorrefmark{1}Shenzhen Institutes of Advanced Technology, Chinese Academy of Sciences, P.R. China~
Email: yc.xu@siat.ac.cn\\
\IEEEauthorrefmark{2}Shenzhen Institutes of Advanced Technology, Chinese Academy of Sciences, P.R. China~
Email: vincentchau@siat.ac.cn\\
\IEEEauthorrefmark{3}College of Science, Tianjin University of Technology, P.R. China~
Email: wu\_chenchen\_tjut@163.com\\
\IEEEauthorrefmark{4}Shenzhen Institutes of Advanced Technology, Chinese Academy of Sciences, P.R. China~
Email: zhangyong@siat.ac.cn\\
\IEEEauthorrefmark{5}Department of Computer
Science, The University of Hong
Kong, P.R. China~
Email: yfzou@cs.hku.hk}}

\maketitle

\begin{abstract}
Network Function Virtualization (NFV) is a promising virtualization technology that has the potential to significantly reduce the expenses and improve the service agility. NFV makes it possible for Internet Service Providers (ISPs) to employ various Virtual Network Functions (VNFs) without installing new equipments. One of the most attractive approaches in NFV technology is a so-called Joint Placement and Allocation of Virtual Network Functions (JPA-VNF) which considers the balance between VNF investment with Quality of Services (QoS). We introduce a novel capability function to measure the potential of locating VNF instances for each server in the proposed OJPA-HS model. This model allows the servers in the network to be heterogeneous, at the same time combines and generalizes many classical JPA-VNF models. Despite its NP-hardness, we present a provable best-possible deterministic online algorithm based on dynamic programming (DP). To conquer the high complexity of DP, we propose two additional randomized heuristics, the Las Vegas (LV) and Monte Carlo (MC) randomized algorithms, which performs even as good as DP with much smaller complexity. Besides, MC is a promising heuristic in practice as it has the advantage to deal with big data environment. Extensive numerical experiments are constructed for the proposed algorithms in the paper.
\end{abstract}

\IEEEpeerreviewmaketitle

\section{Introduction}
Network Function Virtualization (NFV) is an emerging technology in which network functions are executed on generic-purpose
servers instead of proprietary software appliances. Such replacement makes it easier for Internet Service Providers (ISPs) to employ various Virtual Network Functions (VNFs) including firewalls, load balancers, network address translators, content filtering, deep packet inspection and so on. Also, with the advent of such evolution of networking, it is possible to employ the pre-mentioned VNFs without installing new equipments and thus more environmental friendly and cost efficient.

One of the most attractive and promising approaches with the potential to dramatically reduce the expenses is the Joint Placement and Allocation of Virtual Network Functions (JPA-VNF). This subject aims to find a good balance between VNF instances investment in the network in order to provide  specific service requirements with the Quality of Services (QoS), for example, packet loss probability. Generally, the objective is to optimally or nearly optimally allocate the VNF instances in the network in order to serve possible networking demands with certain QoS requirements.
Service requirements in NFV are characterized by network flows, and the packets need to pass through the ordered set of VNFs before reaching the destination. Such specific order is studied as the so-called Service Function Chain (SFC) in the NFV literature (see \cite{mgz16,wlw16,zxl17}). Vast works on NFV are done under the SFC constraints. For example, several such interesting problems are considered by Sallam et al. \cite{sg18}: 1) How to find an SFC-constrained shortest path between any pair of nodes? 2) What is the achievable SFC-constrained maximum flow? 3) How to place the VNFs such that the cost (the number of nodes to be virtualized) is minimized, while the maximum flow of the original
network can still be achieved even under the SFC constraint?

Various works are done from approximation algorithm perspective. Tomassilli et al. \cite{tgh18} study the problem of how to optimally place VNFs in the network such that the total deployment cost is minimized and at the same time all the SFC requirements of the flows are satisfied. Lukovszki et al. \cite{lrs18} studies approximation algorithms for the incremental deployment of a minimum number of middleboxes at optimal locations, such that capacity constraints at the middleboxes and length constraints on the communication routes are respected. And a NFV service distribution problem is also studied from approximation algorithm perspective by Feng et al. \cite{fl18}, whose goal is to determine the placement of VNFs, the routing of service flows, and the associated allocation of cloud and network resources that satisfy client demands with minimum cost. Particularly, they provide an $O(\varepsilon)$ approximation algorithm running in time $O(1/\varepsilon)$.

Heuristic algorithms are also popular in NFV study due to its practical efficiency. A near-optimal solution is solved by heuristic in \cite{lp16} for the real-time NFV, aiming to maximize the total number of requests assigned to the cloud for each SFC, while the deadline constraints are obeyed. Bari et al. \cite{bc15} provide a dynamic programming based heuristic to solve large instances of VNF placements. Trace driven simulations on real-world network topologies demonstrate the performance of the proposed heuristic is within 1.3 times of the optimal solution. Casado et al. \cite{ck10} consider the model with a single type of VNF and present a heuristic algorithm towards solving the placement problem.
And Gember et al. \cite{gk13} propose the design and implement for a novel architecture called Stratos, an orchestration layer for virtual
middleboxes, in order to overcome the lack of systematic tools to efficiently compose and provision in-the-cloud middleboxes.
Refer to \cite{ccw12} and \cite{hgj15} for more related work from different technical perspectives in the NFV literature.

\section{Related Work}\label{sec:related}
In general Joint Placement and Allocation of Virtual Network Functions (JPA-VNF), an undirected and connected network $G$ is given, for which we denote the set of nodes by $\cV$.  Each node in the network represents a private server or datacenter owned by a certain network operator or a service provider which has the potential to locate VNF instances. The service requirements are represented as a set of flow passing through the network, i.e., enter the network at a source node, get served along the flow path and leave the network afterward at the destination node. We denote the set of flow by $\cJ$, which is known before making decisions in previous work. Each flow $j\in \cJ$ has its specific amount of demand $d_j$ to serve, as well as the particular service path $P_j$, which is also known in JPA-VNF literature. Each $P_j$ is described as a particular sequence of nodes in the network. It is of huge difference for serving a requirement in a traditional network environment with that in a virtualized network environment. In the latter, we say a flow requirement $j$ is served or satisfied as long as the sum of VNF instances placed in the nodes which $P_j$ passing through are sufficient to serve $d_j$. While in the traditional environment, the network functions can only be placed in certain locations equipped on which are the proprietary hardware. Note distinct flows may overlap each other which may cause the crowd as well as the wasting of resources, see Fig. \ref{VNF} as an example. The goal in general JPA-VNF is mainly to balance the consumption of computing resources with the QoS.

\begin{figure}
\begin{center}
  \includegraphics[width=6cm,height=5cm]{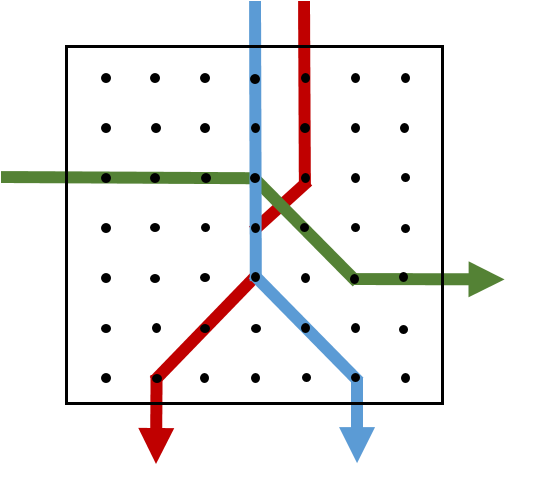}\\
  \caption{General JPA-VNF}\label{VNF}
\end{center}

\end{figure}

Many scenarios are proposed and considered for real-world requirements in JPA-VNF literature. For example in the work of Sang et al. \cite{sj2017}, they formulate a scenario of JPA-VNF as the following MILP problem:
\bea\label{info2017}
& \min\limits_{u_i,x_{ij}}  &\sum\limits_{i \in \cV} u_i \nn\\
& {\rm subject~to} &  \sum\limits_{i\in P_j} x_{ij} \ge d_j, ~~~~~~for~all~ j\in \cJ, \\
& & \sum\limits_{j\in \cJ} x_{ij}\le u_iR, ~~~~for~all~ i\in \cV, \nn\\
& & x_{ij}, u_i\in \mathbb{N}, ~~~~~~~~for~all~ i\in \cV,~ j\in \cJ.\nn
\eea

They take into account that an arbitrary VNF instance implemented on the virtual server has a limited fixed amount of computing resources, denoted by $R$ in program (\ref{info2017}). And multiple copies of $R$ can be added to the server by activating more VNF instances. In their model, one has to decide the placement and allocation of VNF instances such that all the flow demand can be served, at the same time the number of VNF instances implemented on the total network is minimized. In other words, the QoS is 100\% guaranteed and the wasting of computing resources is minimized in the model. They proposed an almost-tight approximation algorithm for this NP-hard problem.

However, a hidden assumption in the above work is that any virtual server in the given network has unlimited potential to activate VNF instances, while it is not always the truth. In the latter work of Sallam and Ji \cite{sj2019}, they remove this assumption and make a new one that once a virtual server is activated to be a VNF node, it has a fixed limited computing resources to serve flow demand. Each virtual server $i$ in the network has its specific amount $c_i$ of potential capacity and specific cost $b_i$ to be activated. In particular, they formulate the model as the following MILP:
\bea\label{info2019}
& \max\limits_{\cU,x_{ij}}  &\sum\limits_{j \in \cJ} d_j\cdot\mathbbm{1}_{\{\sum_{i\in\{P_j\cap\cU\}}x_{ij}\ge d_j\}} \nn\\
& {\rm subject~to} &  \sum\limits_{j\in \cJ} x_{ij} \le c_i\cdot\mathbbm{1}_{\{i\in\cU\}}, ~~~for~all~ i\in \cV, \\
& &  \sum\limits_{i\in \cU} b_i \le B,  \nn\\
& & x_{ij}\in \mathbb{N}, ~\cU\in\cV,~~~~~~~~~~for~all~ i\in \cV,~ j\in \cJ.\nn
\eea

The notation $\mathbbm{1}_{\{\mathcal{E}\}}$  is an indicator function for event $\mathcal{E}$ that equals to 1 if $\mathcal{E}$ happens and 0 otherwise. The capacity and budget constraints are considered in program (\ref{info2019}). In this model, one has to decide the subset $\cU$ of virtual servers in the network that will be activated to VNF nodes, as well as the allocation $x_{ij}$ of the service requirement $j\in\cJ$ along its specific flow path $P_j$, so as to maximize the total amount of fully served demand. In other words, this model is to maximize the QoS with limited budget of computing resources, and the crowd constraints are also considered in the model. A novel relaxation method is presented that modifies the objective function to a submodular one and through which a $(1-e^{-1})/2$-approximation algorithm is proposed in their work. Other interesting but less-related models and algorithms on JPA-VNF see \cite{cw18,cm14,cl15,mt14}.

\section{Our Models and Definitions}\label{sec:model}
We continue to use the notations defined in the last section if they cause no confusion. For the given undirected network $G$, let $\cV$ be the set of virtual servers that have the potential to locate VNF instances. For each distinct server $i\in\cV$, we formulate its capability of locating VNF instances by a specifically defined function $f_i(\cdot)$, the variable of which is the amount of demand assigned to server $i$, denoted by $u_i$. The value of $f_i(u_i)$ reflects a cost of using $u_i$ amount of computing resources of server $i$. By this function, we allow the virtual server to be heterogeneous and at the same time generalize a large family of capability functions, such as incremental function, concave function, step function, hard-capacitated function, etc. We only restrict $f_i(\cdot)$ to be a nondecreasing and left-continuous mapping from nonnegative reals to nonnegative reals with infinity. In particular, we make the following assumption.

\emph{Assumption 1}:  For each $i\in\cV$, $f_i(\cdot)$ satisfies the following:
\begin{enumerate}
  \item $f_i(\cdot)$ is a mapping from $\mathbb{R}^+$ to $\mathbb{R}^+\cup \{+\infty\}$;
  \item $f_i(x)\le f_i(y)$ for every $x\le y$;
  \item $\lim\limits_{x\rightarrow u^-}f_i(x)=f_i(u)$ for every $u$.
\end{enumerate}

We will show that our capability function is powerful enough to take both the crowd and the wasting of computing resources into consideration at the end of this section. Under \emph{Assumption 1}, we formulate the Joint Placement and Allocation of virtual network functions with Heterogeneous Servers (JPA-HS) as the following program:

\bea\label{model}
& \min\limits_{u_i,x_{ij}}  &\sum\limits_{i \in \cV} f_i(u_i) \nn\\
& {\rm subject~to} &  \sum\limits_{i\in P_j} x_{ij} \ge d_j, ~~~~~~for~all~ j\in \cJ, \\
& & \sum\limits_{j\in \cJ} x_{ij}\le u_i, ~~~~~~for~all~ i\in \cV, \nn\\
& & x_{ij}\in \mathbb{N}, ~~~~~~~~~~~~for~all~ i\in \cV,~ j\in \cJ.\nn
\eea

Through the following lemma, we show that by introducing such capability function $f_i(\cdot)$, our problem is well-defined because it is possible for one to make a globally best decision if the decision is not necessarily made in polynomial time.

\begin{lemma}
Under \emph{Assumption 1}, any feasible instance for program (\ref{model}) has a globally optimal solution.
\end{lemma}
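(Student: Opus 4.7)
The plan is to reduce the optimization to a search over a finite candidate set, from which the existence of a minimizer is immediate. The key leverage comes from the integrality of the variables $x_{ij}$ together with condition (2) of \emph{Assumption 1} (monotonicity of $f_i$), while condition (3) (left-continuity) is not needed for existence and presumably matters only for the algorithmic sections.

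I would proceed by establishing three successive ``without loss of generality'' reductions, each of which transforms an arbitrary feasible solution into another feasible solution with no larger objective value. First, for any feasible $(u,x)$, the choice $u_i' = \sum_{j \in \cJ} x_{ij}$ still satisfies the capacity constraint with equality and, by monotonicity of $f_i$, has $f_i(u_i') \le f_i(u_i)$; so we may restrict attention to solutions with $u_i = \sum_{j} x_{ij}$. Second, any mass $x_{ij}$ with $i \notin P_j$ is wasted: it does not help satisfy the demand constraint of flow $j$ but does inflate $u_i$, so setting such $x_{ij}$ to $0$ preserves feasibility and (again by monotonicity, through the first reduction) does not increase the objective. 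Third, for each flow $j \in \cJ$, as long as $\sum_{i \in P_j} x_{ij} > \lceil d_j \rceil$ we can decrement any positive $x_{ij}$ with $i \in P_j$ by one, still satisfying $\sum_{i \in P_j} x_{ij} \ge \lceil d_j \rceil \ge d_j$; iterating, we may assume $\sum_{i \in P_j} x_{ij} = \lceil d_j \rceil$, and in particular $x_{ij} \le \lceil d_j \rceil$ for every $i,j$.

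After these three reductions, each $x_{ij}$ is a nonnegative integer with $x_{ij} = 0$ when $i \notin P_j$ and $x_{ij} \le \lceil d_j \rceil$ otherwise. Since $|\cV|$ and $|\cJ|$ are finite, the set of candidate $x$-vectors is finite, and feasibility of the instance ensures it is nonempty. The objective $\sum_{i} f_i\!\left(\sum_{j} x_{ij}\right)$, viewed as a function on this finite set taking values in $\mathbb{R}^+ \cup \{+\infty\}$, attains its minimum at some vector $x^*$; pairing it with $u_i^* = \sum_{j} x^*_{ij}$ yields a globally optimal solution to (\ref{model}).

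I do not anticipate a real obstacle: the subtlest point is simply verifying that the three reductions are simultaneously executable (they are, because each one only decreases coordinates of $x$ and the dependencies are acyclic). If an unexpected snag were to appear, it would be in the third reduction when some $d_j$ is non-integral, but the use of the ceiling $\lceil d_j \rceil$ and the integrality of $x_{ij}$ handle exactly that case.
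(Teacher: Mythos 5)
Your proof is correct, but it takes a genuinely different route from the paper's. The paper argues by compactness: it takes a minimizing sequence $\{(u^k,x^k)\}$, extracts a convergent subsequence, and uses monotonicity together with left-continuity (condition (3) of \emph{Assumption 1}) to show the limit point attains the infimum. You instead exploit the integrality constraint $x_{ij}\in\mathbb{N}$ to shrink the feasible region to a finite candidate set via three monotone reductions (tighten $u_i$ to $\sum_j x_{ij}$, zero out $x_{ij}$ for $i\notin P_j$, and trim each flow's total allocation down to $\lceil d_j\rceil$), after which existence of a minimizer is trivial. Your argument buys two things: it needs only monotonicity, correctly identifying left-continuity as irrelevant to existence, and it sidesteps the weakest step of the paper's proof, namely the assertion that $u^k$ and $x^k$ are bounded ``from 1) of \emph{Assumption 1}'' --- condition (1) says nothing about boundedness of a minimizing sequence (e.g.\ if some $f_i$ is eventually constant), and the paper's compactness argument really needs a truncation step like your third reduction to justify it. What the paper's analytic approach buys in exchange is robustness to relaxing the integrality of $x_{ij}$: if the allocations were allowed to be continuous, your finite-enumeration argument would collapse while the sequential-compactness argument (with a corrected boundedness step and genuine use of left-continuity) would survive. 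For the model as actually stated, your proof is the more elementary and the more airtight of the two.
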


\begin{proof}
Assume w.l.o.g. that the infimum $O^*$ of objective values for the set of solutions to program (\ref{model}) is a well-defined finite number, because otherwise any feasible solution is optimal (directly from the fact that no solution has a finite objective value). Then there must be a sequence of solutions $\{(u^k,x^k)\}^\infty_{k=1}$ such that $\lim_{k\rightarrow \infty}O(u^k,x^k)=O^*$, where $O(u,x)$ is the objective value of solution $(u,x)$. Note $u^k$ and $x^k$ are bounded from 1) of \emph{Assumption 1}. Since $O^*$ is finite, each $f_i(u_i)$ must be bounded for all $k> k_0$. Therefore an infinite subsequence $\{(u^{k_p},x^{k_p})\}^\infty_{p=1}$ can be picked such that for any $i\in\cV$, the following limits exist:
\begin{enumerate}
  \item $\lim_{p\rightarrow \infty}u^{k_p}=u^*$;
  \item $\lim_{p\rightarrow \infty}x^{k_p}=x^*$;
  \item $\lim_{p\rightarrow \infty}f_i(u_i^{k_p})=f_i^*$.
\end{enumerate}

Since $f_i$ is nondecreasing and left-continuous by assumption, $f_i(\lim_{p\rightarrow \infty}u_i^{k_p})\le f_i^*=\lim_{p\rightarrow \infty}f_i(u_i^{k_p})$, thus $(u^*,x^*)$ must be a globally optimal solution.
\end{proof}

However, if we restrict the decision be made in polynomial time, then the answer must be negative. We will briefly prove that JPA-HS is at least NP-hard by showing it generalizes program (\ref{info2017}).  In fact, our model of JPA-HS can also handle the capacity constraints in (\ref{info2019}).  Based on the NP-hardness of program (\ref{info2017}) proved in Sang et al. \cite{sj2017}, we are able to state the following.

\begin{lemma}
JPA-HS is at least NP-hard.
\end{lemma}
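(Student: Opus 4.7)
The plan is to exhibit a polynomial-time reduction from program~(\ref{info2017}) --- known to be NP-hard by Sang et al.~\cite{sj2017} --- to JPA-HS, thereby transferring the hardness. Concretely, given an instance of~(\ref{info2017}) on $(G,\cJ,\{P_j\},\{d_j\},R)$, I would build the JPA-HS instance on the very same network, flows, paths and demands, distinguished only by the choice of capability function $f_i(u) := \lceil u/R\rceil$ for every $i\in\cV$ (with $f_i(0)=0$). Intuitively, one unit of $f_i$ accounts for one activated VNF instance of capacity $R$ on server $i$.

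Two facts then need to be checked. First, that $f_i$ obeys \emph{Assumption 1}: the codomain and nondecreasingness are immediate, while \emph{left}-continuity is the delicate point because the step function must be constant on the half-open intervals $((k-1)R,\,kR]$, so that the value at $u=kR$ agrees with the left limit (not the right limit). Second, that the two optima match. Given any JPA-HS feasible $(u_i,x_{ij})$, the assignment $\tilde u_i := \lceil u_i/R\rceil \in \mathbb{N}$ is feasible for~(\ref{info2017}) since $\tilde u_i R \ge u_i \ge \sum_j x_{ij}$, with cost $\sum_i \tilde u_i = \sum_i f_i(u_i)$. Conversely, given a feasible $(\tilde u_i,x_{ij})$ of~(\ref{info2017}), setting $u_i := \sum_j x_{ij}$ is JPA-HS feasible with $f_i(u_i) = \lceil u_i/R\rceil \le \tilde u_i$, because $\sum_j x_{ij} \le \tilde u_i R$. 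Taking minima in both directions forces equality.

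The main pitfall I anticipate is the left-continuity check: one is tempted to use $\lfloor u/R\rfloor$ or to place the jump points on the wrong side of the interval, either of which would violate Assumption~1(3). Once the correct right-closed step function is fixed, the remainder of the argument is essentially bookkeeping, and I would defer to Sang et al.\ \cite{sj2017} for the NP-hardness of the base problem~(\ref{info2017}), concluding that any polynomial-time exact algorithm for JPA-HS would immediately solve~(\ref{info2017}) in polynomial time.
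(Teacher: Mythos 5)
Your reduction is exactly the one the paper uses: the same step capability function $f_i(u)=\lceil u/R\rceil$ (with $f_i(0)=0$) mapping an instance of program~(\ref{info2017}) to JPA-HS, invoking the NP-hardness result of Sang et al.~\cite{sj2017}. The paper dismisses the verification of \emph{Assumption 1} and the equivalence of optima as ``easy to check,'' whereas you spell out the left-continuity of the right-closed step function and the two-directional cost correspondence, so your write-up is a more detailed version of the same argument and is correct.
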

\begin{proof}
For all $i\in\cV$, by setting the following capability function, we are able to reduce JPA-VNF scenario (\ref{info2017}) to JPA-HS.
\begin{equation}
f_i(u_i)=
\left\{
\begin{array}{lr}
0, &~~~~~ u_i=0;\\
\lceil\frac{u_i}{R}\rceil, & u_i>0.
\end{array}
\right. \nn
\end{equation}

Easy to check that the above function satisfies \emph{Assumption 1}, particularly see Fig. \ref{step} as a description. Thus JPA-HS is harder than JPA-VNF which is proved to be NP-hard in \cite{sj2017}. Complete the proof.
\begin{figure}[h]
\begin{center}
  \includegraphics[width=6cm,height=5cm]{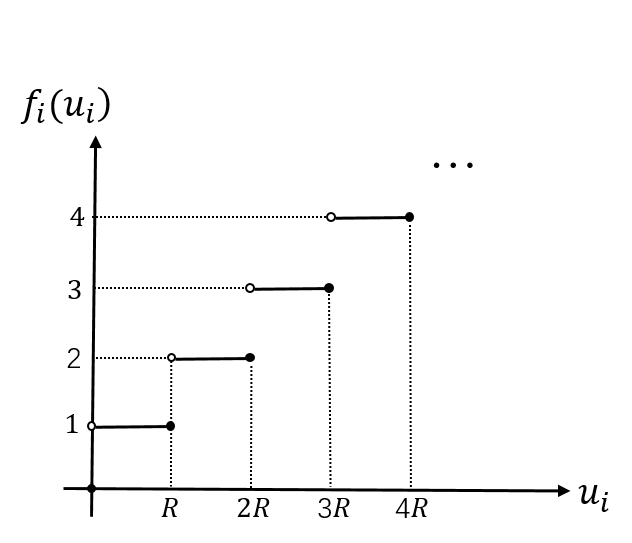}\\
  \caption{Capability function in JPA-VNF scenario (\ref{info2017})}\label{step}
\end{center}
\end{figure}

\end{proof}

Despite the above step function, we allow $f(\cdot)$ to locate in much larger function classes, for example linear function, concave function, hard-capacitated function etc. Taking a glance at the hard-capacitated one, note this is the capability function in JPA-VNF scenario (\ref{info2019}). For each $i\in\cV$, by setting
\begin{equation}
f_i(u_i)=
\left\{
\begin{array}{lr}
0, &~~~~~ u_i=0;\\
b_i, &~~~~~ 0<u_i\le c_i;\\
+\infty, & u_i>c_i,
\end{array}
\right. \nn
\end{equation}
we are able to reduce JPA-VNF scenario (\ref{info2019}) to JPA-HS neglecting the objective in that problem. Moreover, we allow distinct virtual servers have different capability functions in our model as long as it satisfies \emph{Assumption 1}.

For real-world application consideration, we mainly consider the online scenario of the described problem as it is always the case that we have continuously online service requirements. Only after the previous requirements get served, we start to consider the later one. No information for the sequence of service requirements are known before realized, i.e., we have no way to know the flow path $P_j$ as well $d_j$ before flow $j$ comes. And in NFV practical, decisions are costly and difficult to reverse. Therefore, we mainly consider the online version where no precious decision can be removed or changed. We give a name for this brand new model as Online Joint Placement and Allocation of virtual network functions with Heterogeneous Servers (OJPA-HS).

\section{Deterministic Online Algorithm}
In the last section, we already prove the NP-hardness of JPA-HS by reduction. However, it is far from the limit of the hardness for OJPA-HS. For the negative part, we construct an adversarial instance that no deterministic online algorithm can have a bounded competitive ratio. For the positive part, we propose a deterministic algorithm that from performance perspective always outputs the best-possible solution a deterministic online algorithm can do.

\subsection{Negative: an adversarial instance}
We reformulate the inputs of OJPA-HS by the defined recording matrix as following. Let $n:=|\cV|$ be the number of servers/nodes in the network and $m:=|\cJ|$ be the total number of flows at the end of the online instances. For notational convenience, we assume that the nodes are all pre-labeled in continuous order of positive integers and the flows are labeled in the same way according to the arriving time. Let matrix $A$ be the $n\times m$-dimensional matrix recording the nodes each arriving flow passing through. For each element $a_{ij}$ in $A$, it equals to 1 if flow $j$ passing through node $i$, and 0 otherwise. In this way, the matrix is revealed column by column over time. Consider a special case of OJPA-HS of which all the capability functions of virtual servers are hard-capacitated functions. In particular, we set up the instance by letting $d_j=1$ for all $j\in\cJ$, and
\begin{equation}
f_i(u_i)=
\left\{
\begin{array}{lr}
0, &~~~~~ u_i=0;\\
1, &~~~~~ 0<u_i\le 1;\\
+\infty, & u_i>1,
\end{array}
\right. \nn
\end{equation}
for all $i\in\cV$. Suppose that the $k$th flow passes through nodes $\{2k-1,2k\}$ for all $k\in\{1,\cdots,m/2\}$ and let $m=n$ be even for ease. See the recording matrix in Fig. \ref{matrix}, in which we ignore all zeros and highlight in red the allocations of VNF instances (i.e. the variable $x_{ij}$ who equals 1) output by an arbitrary deterministic online algorithm.
\begin{figure}[h]
\begin{center}
  \includegraphics[width=6cm,height=4.5cm]{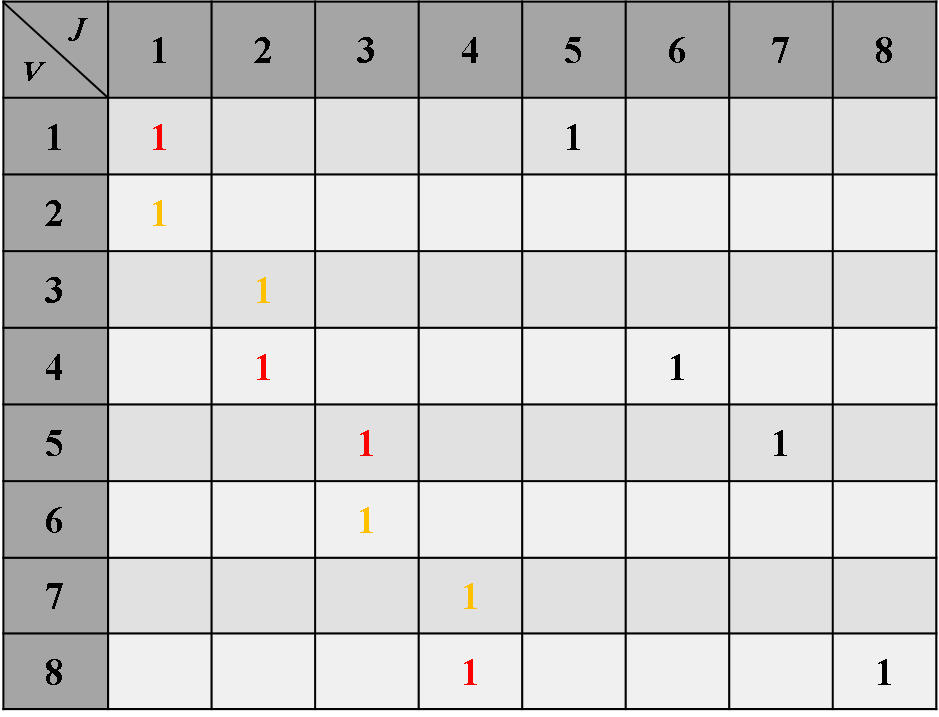}\\
  \caption{Recording matrix with $m=n=8$}\label{matrix}
\end{center}
\end{figure}

As long as the VNF allocations for the first half of flows (i.e. with label $\{1,\cdots,m/2\}$) are fixed/determined, then an adversarial set of the latter half of flows (i.e. with label $\{m/2+1,\cdots,m\}$) may pass through exactly the allocated nodes determined by the algorithm. Note this output can be any of the two,  $2k-1$ or $2k$ for the $k$-th flow in the above instance. Then even the best-possible decision one can make hits an unbounded cost of $+\infty$. From the other side, if we are informed by the information for all coming flows, we can make the best decision by switch our decision for all flows labeled $\{1,\cdots,m/2\}$. In the Fig. \ref{matrix} for example, we may switch our decision from red to orange and get a total cost of $m$. Thus the competitive ratio of any deterministic online algorithm in this example must be unbounded.
\begin{lemma}
No deterministic online algorithm for OJPA-HS has a bounded competitive ratio.
\end{lemma}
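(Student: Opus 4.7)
The plan is to establish the lower bound via an explicit adversarial construction, of the kind already outlined in the paragraphs preceding the lemma. Concretely, I would instantiate OJPA-HS with unit demands ($d_j=1$) and identical hard-capacitated capability functions of capacity one: $f_i(0)=0$, $f_i(u)=1$ for $0<u\le 1$, and $f_i(u)=+\infty$ for $u>1$. I would take $n=m$ even and let the first $m/2$ flows be given by the paths $P_k=\{2k-1,2k\}$, so that each early flow consists of a fresh disjoint pair of nodes. Because capacities are one and demands are one, any feasible online algorithm must assign flow $k$ to exactly one of the two nodes in $P_k$.

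Next I would invoke the adversary. Fix any deterministic online algorithm $\mathrm{ALG}$. Since $\mathrm{ALG}$ is deterministic and its action on flow $k$ depends only on the first $k$ inputs, for each $k\le m/2$ its choice $s_k\in\{2k-1,2k\}$ is fully determined by the prefix of the input seen so far. The adversary, which produces the input after observing $\mathrm{ALG}$'s past choices, now emits the remaining $m/2$ flows so that flow $m/2+k$ has path $P_{m/2+k}=\{s_k\}$ (a singleton containing exactly the node $\mathrm{ALG}$ already used). Each such latecomer can only be served at the unique node on its path, which is already saturated, forcing $u_{s_k}>1$ and hence $f_{s_k}(u_{s_k})=+\infty$. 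Thus $\mathrm{ALG}$ pays $+\infty$ on this instance.

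To finish, I would exhibit an offline solution with finite cost on the same instance. Knowing the full sequence in advance, one can serve the early flow $k$ at the other node of $P_k$, namely $2k-1$ if $s_k=2k$ and vice versa, leaving $s_k$ free for the singleton late flow. Every node now carries at most one unit of demand, so the total cost is at most $m$. Forming the ratio yields $+\infty/m=+\infty$, establishing that no deterministic online algorithm can attain a bounded competitive ratio.

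The construction itself is routine; the only subtlety I would be careful about is the order of quantifiers. The adversary must be allowed to tailor the second half of the input to the algorithm's revealed choices, which is standard in online competitive analysis but should be stated explicitly so that the argument applies uniformly to every deterministic $\mathrm{ALG}$. I would also briefly verify that the chosen capability function satisfies \emph{Assumption 1} (nondecreasing and left-continuous, mapping $\mathbb{R}^+$ into $\mathbb{R}^+\cup\{+\infty\}$), so that the instance is genuinely a legal OJPA-HS input and the lower bound is not an artifact of a pathological cost function.
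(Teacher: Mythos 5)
Your proposal is correct and follows essentially the same adversarial construction as the paper: unit demands, unit-capacity hard-capacitated servers, disjoint node pairs for the first $m/2$ flows, and a second wave of flows targeted at exactly the nodes the deterministic algorithm committed to, against an offline solution of cost $m$ that uses the complementary nodes. Your write-up is in fact somewhat more careful than the paper's (explicit singleton paths for the late flows, the quantifier order, and the check against \emph{Assumption 1}), but the underlying argument is identical.
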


\subsection{Positive: a best-possible deterministic online algorithm}
Fortunately, we can make a best-possible online decision for OJPA-HS. By best-possible, we mean that we cannot behave better without changing our previous decisions in an arbitrary time slot. Assume that $\cJ$ arrives in sequence $\{j_1,j_2,\cdots,j_m\}$ and $\{v_1,v_2,\cdots,v_n\}$ is an arbitrary global label for $\cV$. Each server $v$ has a capability function $f_v(\cdot)$. The flow $j_k$ with an integral demand $d_{j_k}$, passes though a subset of $\cV$ denoted by  $P_{j_k}$. For notation convenience, we give a local label for the servers in $P_{j_k}$ as $\{v_1^k,v_2^k,\cdots,v_{n_k}^k\}$ such that it is a subsequence of $\{v_1,v_2,\cdots,v_n\}$, where $n_k=|P_{j_k}|$. We build a 3-dimensional dynamic programming to solve the best-possible allocation as following.

Considering the flow $j_k$, let $a^k(i,\omega)$ be the element of a 2-dimensional table $a^k$, where $i\in [1,n]$, $\omega\in [0,d_{j_k}]$ are both integers. The value of $a^k(i,\omega)$ represents the minimum cost of serving $\omega$ demand of $j_k$ after considering possible allocations on servers $\{v_1,v_2,\cdots,v_i\}$.  Of course we only consider the servers in  $P_{j_k}$ and for each $v_i\in \cV \setminus P_{j_k}$, we set $a^k(i,\omega)=+\infty$ or a sufficiently large number.
$\delta$ is an $n$-dimensional vector that updates over time. To specify its value, we record the value of $\delta$  by $\delta^k$ at the moment we make a decision for $j_k$. And the $i$-th element of $\delta^k$ represents the current allocation of $v_i$ after serving flow $j_k$. Let $\Delta f^k_i(\omega):=f_{v_i^k}(\omega+\delta_i^{k-1})-f_{v_i^k}(\delta_i^{k-1})$ be the cost increase after adding $\omega$ demand of flow $j_k$ to server $v_i^k$. And we assume w.l.o.g. that $\delta^0_i=0, f_i(0)=0$ for all $v_i\in \cV$. Compute for each $i$ such that $v^k_i\in P_{j_k}$ and each $\omega\in [0,d_{j_k}]$,
\begin{equation}
a^k(i,\omega)=
\left\{
\begin{array}{lr}
\Delta f^k_i(\omega),~~~~~~~~~~~~~~~~~~~~~~~~~~~~~~ i=1;\\~\\

\min\limits_{\delta^k_i\in [0,d_{j_k}]}\{a^k(i-1,\omega-\delta^k_i)+\Delta f^k_i(\delta^k_i)\},\\~~~~~~~~~~~~~~~~~~~~~~~~~~~~~~~~~~~~i\in[2,n_k].
\end{array}
\right. \nn
\end{equation}
 We can easily derive the recurrence and get the $n\times d_{j_k}$ table $a^k$. The value $a^k(n_k,d_{j_k})$ ($=a^k(n,d_{j_k})$) implies the minimum cost of serving $j_k$. And  each $\delta_i^k$ we choose in the recurrence for $i\in[2,n_k]$ constitutes the decision $\delta^k$ for the $k$-th flow.
Remember we have $m$ such tables. The final solution for OJPA-HS consists from each such $\delta^k$ and the objective value w.r.t. the algorithm is $\sum_{i\in[1,n]}f_{v_i}(\delta_i)$, where again, $\delta_i$ is the $i$-th element of $n$-dimensional vector $\delta$ updating throughout the algorithm.

For easy understanding, we present the pseudocode of the proposed dynamic programming as Algorithm \ref{DP}. Let $D:=\max_{k}d_{j_k}$ be the maximum demand of all the flows. We conclude the following theorem for the proposed deterministic online algorithm.

\begin{algorithm}[h]
\caption{Deterministic online algorithm}
\textbf{\label{DP}}
\nl \textit{Initialization:} $\delta^0=zeros(n)$;\vspace{-10pt} \\

\hrulefill\\
\nl \For {$k=1:m$}{
\nl \For {$i=1:n$}{
\nl \If{$v_i\notin P_{j_k}$}{
\nl \For {$\omega=1:d_{j_k}$}{
\nl $a^k(i,\omega)=+\infty$;\\
\nl $\delta^k_i=0$;\\}}
\nl \ElseIf {$v_i=v^k_1$}{

\nl \For {$\omega=1:d_{j_k}$}{
\nl $a^k(i,\omega)=\Delta f^k_i(\omega)$;\\}}

\nl \Else {
\nl \For {$\omega=1:d_{j_k}$}{
\nl $a^k(i,\omega)=\min\limits_{\delta^k_i}\{a^k(i-1,\omega-\delta^k_i)+\Delta f^k_i(\delta^k_i)\}$;\\
\nl $\delta^k(i)\leftarrow \delta^{k-1}(i)+\delta^k_i$;\\}}}}
\nl \Return Solution=$\delta^k$ for $j_k$\\
\nl~~~~~~~~ Value=$\sum_{i=1}^nf_i(\sum_{k=1}^m\delta^k(i))$.

\end{algorithm}

\begin{theorem}
We can in $O(mnD^2)$ time compute the best-possible allocation for OJPA-HS.
\end{theorem}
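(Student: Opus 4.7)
The plan is to decompose the claim into a correctness part and a complexity part. For correctness, the meaning of ``best-possible'' in this paper is the per-flow-optimal one stated just above the algorithm: once the previous allocations $\delta^{k-1}$ are frozen, the increment produced for flow $j_k$ should minimize the objective among all increments that serve $d_{j_k}$. Since $\sum_i f_i(\delta^{k-1}_i + \delta^k_i) - \sum_i f_i(\delta^{k-1}_i) = \sum_{v_i \in P_{j_k}} \Delta f^k_i(\delta^k_i)$ (using $f_i(0)=0$ and that servers outside $P_{j_k}$ receive no allocation from $j_k$), minimizing the total objective at step $k$ is equivalent to minimizing $\sum_{v_i \in P_{j_k}} \Delta f^k_i(\delta^k_i)$ subject to $\sum_i \delta^k_i \ge d_{j_k}$ and $\delta^k_i \in \mathbb{N}$. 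So the theorem reduces to: for each $k$, Algorithm \ref{DP} computes a vector $(\delta^k_i)$ that is an optimum of this marginal problem, in time $O(nD^2)$.

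Next I would prove the DP recurrence by induction on the local index $i\in[1,n_k]$. The table entry $a^k(i,\omega)$ is claimed to equal the minimum of $\sum_{\ell=1}^i \Delta f^k_\ell(\delta^k_\ell)$ over nonnegative integer tuples with $\sum_{\ell=1}^i \delta^k_\ell = \omega$. The base case $i=1$ is immediate since the only feasible choice is $\delta^k_1 = \omega$, yielding $\Delta f^k_1(\omega)$. For the inductive step, any feasible tuple for $(i,\omega)$ is obtained by choosing some $\delta^k_i \in [0,\omega]$ and a feasible tuple for $(i-1,\omega - \delta^k_i)$; splitting the minimum according to this choice gives exactly the stated recurrence. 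Servers $v_i \notin P_{j_k}$ are forced to $\delta^k_i=0$ by the assignment $a^k(i,\omega)=+\infty$, which is consistent with the problem constraint $\sum_{i\in P_j} x_{ij} \ge d_j$. Monotonicity and left-continuity of $f_i$ from \emph{Assumption 1} ensure that $\Delta f^k_i(\cdot)$ is well-defined and nonnegative, so the minima are attained on the finite integer set $[0,d_{j_k}]$. Since any feasible schedule uses total demand exactly $d_{j_k}$ without loss (one can always reduce $\delta^k_i$ without increasing cost by monotonicity), reading off $a^k(n_k,d_{j_k})$ and the minimizing arguments yields the per-flow optimum, which is precisely the ``best-possible'' decision for $j_k$.

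For the running time, I would count: the outer loop runs $m$ times; for each flow $j_k$, the middle loop runs $n$ times; and for each $(i,\omega)$ with $\omega \in [0,d_{j_k}]$ the cell is computed either in $O(1)$ (the $v_i \notin P_{j_k}$ and $i=1$ branches) or by a minimization over $\delta^k_i \in [0,d_{j_k}]$ which takes $O(d_{j_k}) = O(D)$ time. Thus each flow contributes $O(nD \cdot D) = O(nD^2)$ work, and the total is $O(mnD^2)$. Updating the state vector $\delta^k$ and the final aggregation $\sum_i f_i(\sum_k \delta^k(i))$ add only lower-order $O(mn)$ terms and do not affect the bound.

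The only subtle point, and what I would treat most carefully, is the argument that ``best-possible'' in the sense defined by the authors really coincides with the greedy per-flow minimization that the DP performs. This is not the usual competitive-ratio notion of optimality; it is the statement that, conditional on the irreversibility of prior decisions, no alternative choice at step $k$ can produce a smaller current objective, and by induction the cumulative trajectory is pointwise cost-minimal among all online trajectories sharing the same prefix. The rest is a routine knapsack-style DP verification.
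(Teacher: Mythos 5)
Your proposal is correct and follows essentially the same route as the paper: both reduce ``best-possible'' to the per-flow greedy minimization of the marginal cost $\sum_{v_i\in P_{j_k}}\Delta f^k_i(\delta^k_i)$ given the frozen prefix $\delta^{k-1}$, argue that the recurrence exhaustively covers all integer splits of $d_{j_k}$ over the servers of $P_{j_k}$, and obtain $O(mnD^2)$ by counting the $O(nD)$ table cells per flow times the $O(D)$ options per cell. The only cosmetic difference is that the paper phrases the exhaustiveness claim via an informal $d_{j_k}$-tree/BFS picture, whereas you give the standard DP-invariant induction, which is if anything the cleaner formalization.
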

\begin{proof}
Recall that we compute a 2-dimensional table for $m$ flows. Each table $a^k$ has  $n\times d_{j_k}$ elements and thus is upper bounded by $nD$. All these end in $mnD$ times computation. For each element $a^k(i,\omega)$ we need to compute the minimum $\delta^k_i$ among $D$ options. Putting all together we conclude that the algorithm terminates in $O(mnD^2)$ time.

For optimality part, by best-possible we mean it is optimal at each step. That is, we make the best decision for each single arriving flow under the assumption that we cannot change the previous decisions. Suppose as before flows come in sequence $j_1\rightarrow j_2\rightarrow \cdots \rightarrow j_m$ and we focus on an arbitrary flow $j_k$. We build a $d_{j_k}$-tree for $j_k$, of which the root is a node represents for $j_k$, the depth is $|P_{j_k}|$ (the number of nodes $j_k$ passing through) and each node has $d_{j_k}+1$ branches. Each edge is equipped with a pair of weights $(a,b)$ if the edge connects a node with its $(a+1)$-th child, and $b$ is the cost increase to add $a$ demand of $j_k$ to $v_i^k$, i.e., $b=\Delta f^k_{v^k_i}(a)$. Fig \ref{tree} is an example of $d_{j_k}$-tree for $j_k$ where $d_{j_k}=2$ and $|P_{j_k}|=3$.

\begin{figure}[h]
  \includegraphics[width=8.6cm]{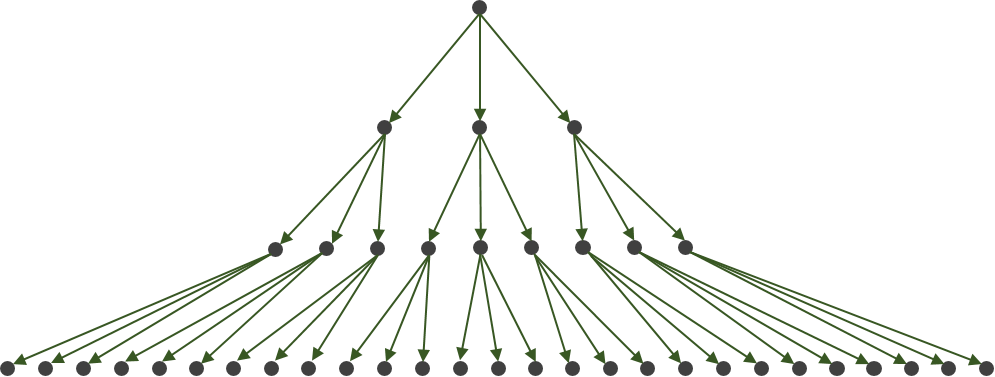}\\
  \caption{$d_{j_k}$-tree for $j_k$}\label{tree}
\end{figure}

Obviously, the $d_{j_k}$-tree contains all possible allocations of demand $d_{j_k}$. Algorithm \ref{DP} is a Breadth First Search (BFS) for $d_{j_k}$-tree to find out a path from the top to the bottom such that the sum of weight $b$ is minimized, at the same time the sum of weight $a$ is exactly $d_{j_k}$. The sum of weight $b$ along the chosen path implies the minimum cost increase of serving $j_k$, completing the proof.
\end{proof}

\section{Randomized Online Algorithm}
It is not the end of the story. In this section, we propose two types of randomized algorithms, both of which maintain a provably small time and space complexity compared to the previous deterministic algorithm. And from extensive numerical experiments we found that we do not lose too much by doing this. To explain the performance, we make a high-level evaluation by analyzing some interesting instances and exploring the reasons behind them.

\subsection{Las Vegas Randomized Algorithm}
Las Vegas Randomized Algorithm is such a randomized algorithm that always outputs a correct solution, in OJPA-HS which means, we always get all the flows fully served. We continue using the notations as in last section. Considering any online instance, we focus on the current arriving flow, w.o.l.g., $j_k$. We maintain a 2-dimensional array throughout the algorithm, according to which we execute our randomized algorithm, both the Las Vegas and Monte Carlo Randomized Algorithms. We denote the array by $P$, where $P$ is a $n\times d_{j_k}$ matrix. Let $P_{XY}$ be the $X$-th row $Y$-th column element of $P$. And this is the probability we pick the element. Particularly, we set $$P^{\prime}_{XY}=\frac{\mathbb{I}_{\{v_X\in P_{j_k}\}}\cdot Y}{f_X(\delta^{k-1}_X+Y)-f_X(\delta^{k-1}_X)},$$ where $\mathbb{I}_{\{v_X\in P_{j_k}\}}$ is the indicator function for the event $v_X\in P_{j_k}$, that is, it equals to 1 if $v_X\in P_{j_k}$ happens and 0 otherwise. To modify the array to be a probability distribution, we scale each $P^\prime_{XY}$ such that the sum of them are exactly one. That is,
$$P_{XY}=\frac{P^\prime_{XY}}{\sum_{X\in [1,n], Y\in[1,d_{j_k}]}P^\prime_{XY}}.$$

We design a Las Vegas Randomized Algorithm according to the constructed array $P$. At each round, we sample an element in $P$ according to the probability $P_{XY}$. Every time we pick an element of $P_{XY}$ means we assign $Y$ demand of $j_k$ to node $X$. For each sample, we need to update the entire array because the current allocation is changed, which may change the cost increase for adding another demand on the picked node. Repeat until the sum of picked value of $Y$ are sufficient to cover demand $d_{j_k}$. When the algorithm terminates, we obtain a feasible solution where all the flows are fully served. This is why we call the proposed algorithm  a Las Vegas Randomized Algorithm (LV). For easy understanding, we present the pseudocode of LV as Algorithm \ref{LV}.

\begin{algorithm}[h]
\caption{Las Vegas Randomized Algorithm}
\textbf{\label{LV}}
\nl \textit{Initialization:} $\delta^0=zeros(n)$;\vspace{-10pt} \\

\hrulefill\\
\nl \For {$k=1:m$}{
\nl $\delta^k\leftarrow \delta^{k-1}$;\\
\nl \Repeat {$\sum_{i\in[1,n]}(\delta^k(i)-\delta^{k-1}(i))\ge d_{j_k}$}{
\nl \For {$i=1:n$}{

\nl \For {$j=1:d_{j_k}$}{
\nl Compute $P^\prime_{ij}=\frac{\mathbb{I}_{\{v_i\in P_{j_k}\}}\cdot j}{f_i(\delta^k_i+j)-f_i(\delta^k_i)}$;\\
\nl Scale $P_{ij}=\frac{P^\prime_{ij}}{\sum_{i\in [1,n], j\in[1,d_{j_k}]}P^\prime_{ij}}$;
}
}
\nl Pick $(X,Y)$ according to the probability $P_{XY}$;\\
\nl $\delta^k(X)\leftarrow \delta^k(X)+Y$;\\
}
}
\nl \Return Solution=$\delta^k$-$\delta^{k-1}$ for $j_k$;\\
\nl~~~~~~~~ Value=$\sum_{i=1}^nf_i(\delta^k(i))$.
\end{algorithm}

In Algorithm \ref{LV}, we use $\delta^k$ to denote the current allocation for flows $\{j_1,j_2,\cdots,j_k\}$, which is different as in Algorithm \ref{DP}. Thus the allocation for $j_k$ is $\delta^k-\delta^{k-1}$. In fact, we can do slightly better than the above, in terms of improve the running time and the performance.
\begin{enumerate}
  \item Remember we update the entire array (line 7 and 8) after each sample, mainly because we should maintain the sum of one in order to keep $P$ a probability distribution. Actually we can make it a little different without change the sum. For example, every time we pick a $(X,Y)$ pair, we can only update the $X$-th row of $P$ and maintain the sum of that row.  That is, compute $P^\prime_{ij}$ for $i=X$ where X is picked in last sample, and then scale $P_{Xj}=\frac{P^\prime_{Xj}}{\sum_{j\in[1,d_{j_k}]}P^\prime_{Xj}}$. For $i\not=X$, we do not update $P_{ij}$.
  \item For the last $Y$ we picked for $j_k$, it may happen that when adding $Y$ demand, it exceed the demand $d_{j_k}$, which may cause a waste of the objective value. Towards that end, we remove the exceed part of demand at the last round. That is, we always keep $\sum_{i\in[1,n]}(\delta^k(i)-\delta^{k-1}(i))= d_{j_k}$ for $j_k$. In this way, we save (at least not waste) the output value, because of the nondecreasing property for each $f_i(\cdot)$.
\end{enumerate}

\subsection{Monte Carlo Randomized Algorithm}
Monte Carlo Randomized Algorithm is an algorithm that we sometimes lose part of the correctness but in turn sometimes we may get better solutions. Further, in our Monte Carlo algorithm, we can take a balance between them. Another advantage of Monte Carlo algorithm is that we can control the running time of it by setting parameters, in this way we may lose some correctness as mentioned. In Algorithm \ref{LV}, we do not care about the round we pick the random pair $(X,Y)$. While in our Monte Carlo Randomized Algorithm (MC), we restrict a fixed round for sampling the random pair. When the algorithm terminates, every flow has a probability not being fully served. Unless setting the number of rounds be $D$,  MC does not always outputs a feasible solution. We propose this algorithm only for practical consideration in the case that the decision maker is allowed to have a certain percentage of fail.

The main body of the MC is quite similar with LV. That is, we maintain the same size array $P$ throughout the algorithm, compute and update $P$ in the same way as we do in LV. The only difference is the break condition for the algorithm. For flow $j_k$, we replace the break condition $\sum_{i\in[1,n]}(\delta^k(i)-\delta^{k-1}(i))\ge d_{j_k}$ with the number of rounds $r$. That is, we do not care if the flow is fully served or not, whenever it execute $r$ rounds of sampling, we break the loop for the flow. When break, we give a label for each flow as SUCCESS or FAIL, depending on the flow is fully served or not. When the algorithm terminates, there is a new output that will report the FAIL rate of serving the flows, defined as $\frac{\#{\rm FAIL}}{m}$. This is the general idea for the proposed MC. There are still some possible modifications that may improve the performance of MC.
\begin{enumerate}
  \item Remember what we do for LV is to remove the exceed demand at the last pick for each flow. We keep doing this for MC, but only for the SUCCESS flows. For the flow labeled FAIL, we remove all the allocation of it and of course no exceed demand can be removed.
  \item The setting of parameter $r$. In the above description, we set a uniform $r$ for each flow, which is not working perfectly. Thinking about the two flow $j$ and $j^\prime$ with demand $d_j>d_{j^\prime}$. For a fixed number of sampling, it is more likely that $j$ will not get fully served than $f^\prime$. From the Morkov inequality we know the probability of a random variable exceeding a fixed constant is proportional with its expectation, thus we can do better by setting the rounds of sampling for $j_k$ proportional with $d_{j_k}$. Particularly, we set the rounds for $j_k$ be $\lceil \frac{d_{j_k}}{r}\rceil$ and here $r$ is a parameter to be determined.
\end{enumerate}

We present the pseudocode of the MC as Algorithm \ref{MC}.

\begin{algorithm}[h]
\caption{Monte Carlo Randomized Algorithm}
\textbf{\label{MC}}
\nl \textit{Initialization:} $\delta^0=zeros(n)$;\vspace{-5pt} \\
\hrulefill\\
\nl \For {$k=1:m$}{
\nl $\delta^k\leftarrow \delta^{k-1}$;\\
\nl \For {round=$1:\lceil\frac{d_{j_k}}{r}\rceil$}{
\nl \For {$i=1:n$}{
\nl \For {$j=1:d_{j_k}$}{
\nl Compute $P^\prime_{ij}=\frac{\mathbb{I}_{\{v_i\in P_{j_k}\}}\cdot j}{f_i(\delta^k_i+j)-f_i(\delta^k_i)}$;\\
\nl Scale $P_{ij}=\frac{P^\prime_{ij}}{\sum_{i\in [1,n], j\in[1,d_{j_k}]}P^\prime_{ij}}$;
}
}
\nl Pick $(X,Y)$ according to the probability $P_{XY}$;\\
\nl $\delta^k(X)\leftarrow \delta^k(X)+Y$;\\
}
\nl \If {$\sum_{i\in[1,n]}(\delta^k(i)-\delta^{k-1}(i))\ge d_{j_k}$}{
\nl Label $j_k$ SUCCESS;}
\nl \Else{
\nl Label $j_k$ FAIL;
}
}
\nl \Return Solution=$\delta^k$-$\delta^{k-1}$ for $j_k$;\\
\nl~~~~~~~~ Value=$\sum_{i=1}^nf_i(\delta^k(i))$;\\
\nl~~~~~~~~ FAIL rate=$\frac{\#(FAIL)}{m}$.

\end{algorithm}

\subsection{Analysis}
\subsubsection{Space-Complexity}
The only space we maintain throughout the algorithms (both LV and MC) are for the 2-dimensional array $P$. We construct a $n \times d_{j_k}$ matrix for $j_k$. When $j_{k+1}$ arrives, we need only $n \times \max\{d_{j_k},d_{j_{k+1}}\}$ space, because we cover and replace the elements for $j_k$ with that for $j_{k+1}$ as the online algorithm proceeds.  This ends up in a space complexity of $O(nD)$ for the proposed LV and MC, where $D=\max_{p\in [1,m]}d_{j_p}$.

\subsubsection{Time-Complexity}
For the proposed randomized algorithms LV and MC, the running time spent are dominated by the computation of array $P$. The value $P^\prime_{XY}$ can be calculated rapidly when the variable $X$ and $Y$ are ready, and for $P_{XY}$ we divide every element of $P^\prime_{XY}$ by the sum of all $P^\prime_{XY}$ in the table. Thus we have constant time for the computation of each $P_{XY}$.  Remember we have at most $nD$ elements for one flow, which gives a total time complexity of $O(mnD)$.

\subsubsection{Performance Evaluation}
From the deterministic online algorithm, we notice that to choose the best-possible VNF allocations for each flow may in turn end up with a bad solution, even when the capability functions are not hard-capacitated ones. Suppose we have two types of servers: one P-Server and $m$ Q-Servers. Each flow passes through the only P-Server and one of the $m$ Q-Servers, the demand of which are all ones. Let $f_P(u_P)=1+\varepsilon\cdot u_P$ and $f_Q(u_Q)=u_Q$. Employing Algorithm \ref{DP} to this instance we get a solution that allocates the demand of $j_k$ on the Q-Server that $j_k$ passes through, which end up in a total cost of $m$. For this instance on the other hand, we can easily get an optimal cost of $1+\varepsilon\cdot m$ by allocating all the demand to the P-Server. This gives a performance guarantee of $+\infty$ as $m$ grows to infinitely large.

\begin{figure}[h]
\begin{center}
  \includegraphics[width=5cm,height=4cm]{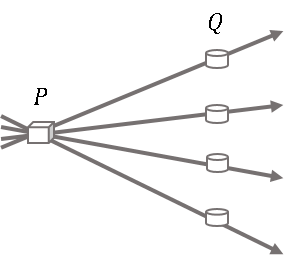}\\
  \caption{P-Server and Q-Servers}\label{PQ}
\end{center}
\end{figure}

Meanwhile in our randomized algorithms, we do not always pursue the best-possible decision for each flow. As we see in the sampling step, we choose the server $X$ as well as the correlative served demand $Y$ according to the probability of $P_{XY}$. Ignoring the indicator $\mathbb{I}_{\{v_X\in P_{j_k}\}}$ that force the probability to be zero for the servers not in $P_{j_k}$, the remaining part of $P^\prime_{XY}$ is the reciprocal of the marginal cost of serving $Y$ in server $X$, i.e., $\frac{{f_X(\delta^{k-1}_X+Y)-f_X(\delta^{k-1}_X)}}{Y}$. With such settings, the algorithms are more likely to choose the $(X,Y)$ pair with small marginal cost, but not always the smallest one. In the numerical experiments we show that by doing this, it performs even as good as the deterministic online algorithm which costs $O(mnD^2)$ time, as the randomized one only costs $O(mnD)$ time.

\begin{theorem}
The proposed randomized heuristic algorithms terminate in time $O(mnD)$ with space complexity $O(nD)$.
\end{theorem}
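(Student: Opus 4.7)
The plan is to analyze time and space separately, exploiting the fact that both LV (Algorithm~\ref{LV}) and MC (Algorithm~\ref{MC}) share the same data structure---the probability matrix $P$---and the same inner loop, so a single accounting suffices for both.

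First I would handle the space bound. The only objects that must persist across the outer loop are the probability matrix $P$ and the allocation vector $\delta^k$. For flow $j_k$, the matrix $P$ has exactly $n\cdot d_{j_k}$ entries; because the matrix for the next flow $j_{k+1}$ can overwrite the previous one in place (as noted in the Space-Complexity discussion), the working storage for $P$ never exceeds $n\cdot \max_{k} d_{j_k} = nD$. Adding the $O(n)$ slots for $\delta^k$ yields the claimed $O(nD)$ bound.

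Next I would bound the running time per flow. The cost is dominated by building and maintaining $P$. Computing a single entry $P'_{XY}$ requires one evaluation of $f_X$ (charged as $O(1)$ under the usual uniform-cost assumption) and one arithmetic operation; normalizing to $P_{XY}$ is $O(1)$ per entry once the total mass is known, and maintaining that mass as a running sum makes a full rebuild cost $O(nD)$. After a sample $(X,Y)$, only $\delta^k_X$ changes, so only row $X$ of $P$ is affected; recomputing that row and adjusting the running sum costs $O(D)$ per sample. The number of samples is at most $d_{j_k}\le D$ in LV, because each picked $Y\ge 1$ and the loop terminates once the cumulative demand reaches $d_{j_k}$, and exactly $\lceil d_{j_k}/r\rceil\le D$ in MC by construction. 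Hence each flow costs $O(nD)+O(D\cdot D)=O(nD)$ in the regime $D\le n$ relevant to the problem, and summing over $m$ flows gives $O(mnD)$.

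The main obstacle is justifying the localized row-update step, which is what keeps the per-sample work at $O(D)$ rather than $O(nD)$ and prevents an extra factor of $D$ from creeping in. This relies on the observation already flagged in the discussion after Algorithm~\ref{LV}: sampling $(X,Y)$ only changes $\delta^k_X$, hence only the $X$-th row of $P$ depends on the updated state, and the global normalization constant can be maintained incrementally by subtracting the contribution of the old row and adding that of the new. Once this localization is in hand, the remaining arithmetic is a routine telescoping of the per-flow costs.
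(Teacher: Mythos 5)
Your space argument coincides with the paper's: the only persistent structure is the $n\times d_{j_k}$ array $P$, overwritten in place from one flow to the next, giving $O(nD)$ plus $O(n)$ for $\delta$. For the running time you take a genuinely more careful route than the paper, and in doing so you expose a real issue. The paper's own proof simply counts the $nD$ entries of $P$ once per flow (``we have at most $nD$ elements for one flow'') and multiplies by $m$; it never charges for the fact that the repeat-loop rebuilds $P$ after every sample and that there can be up to $d_{j_k}$ samples per flow (a literal reading of the pseudocode of Algorithm~\ref{LV} therefore gives $O(mnD^2)$, the same as DP). Your accounting does charge for this: $O(nD)$ for the initial build plus $O(D)$ per sample for the affected row, times at most $D$ samples, i.e.\ $O(nD+D^2)$ per flow. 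That is the honest bound, but it equals $O(nD)$ only under your added hypothesis $D\le n$, which appears nowhere in the theorem statement and is in fact violated in the paper's own experiments ($n=50$ with $D=200$ or $D=1000$). So either the theorem needs the proviso $D=O(n)$, or one must adopt the paper's looser convention of counting the table construction once per flow. Two smaller points: (i) you should also bound the cost of drawing a sample from the $n\times d_{j_k}$ distribution --- a naive cumulative scan is $O(nD)$ per sample, which would reintroduce an $O(nD^2)$ term; maintaining row sums and sampling row-then-column brings this to $O(n+D)$ per sample, which is absorbed by your other terms; (ii) the localized row update you rely on is only an optional modification described in the prose after Algorithm~\ref{LV}, not what the pseudocode literally executes, so your bound applies to the optimized variant rather than to Algorithms~\ref{LV} and~\ref{MC} as written.
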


\section{Numerical Results}
In this section, we evaluate the performances of our proposed Algorithm \ref{DP} (DP), Algorithm  \ref{LV} (LV) and Algorithm \ref{MC} (MC). Our environment for experiments is Intel(R) Xeon(R) CPU E5-2620 v4 @ 2.10GHz with 64GB memory. We construct extensive numerical experiments to analyze different impacts of the proposed algorithms as well as the parameter settings. Mainly, we make cross comparisons between DP, LV and MC. Note MC is different from DP and LV because we are not supposed to get feasible solutions. Thus it makes no sense to compare the output value of MC with the other two algorithms. For MC, we focus on the running time comparison between it with LV (and thus with DP because we also have the running time comparison for LV with DP).At last, by setting different parameter $r$, we are able to see the impact of the rounds of sampling in MC.
Note in the following experiments, the capability functions of the servers are generated among linear function, concave function, hard capacitated function and step function with equal probabilities. Note in some of the result figures, we smooth the lines for presentation considerations.

 \textbf{1) DP vs. LV for 35 random instances}
In this experiment, we randomly generate 35 instances with given support. Particularly, we set the size of the network to be 50 nodes,
passing through which are 400 flows with maximum demand 200. The paths of the flows are also generated uniformly at random within the network. We run DP and LV once for each generated instance. As was expected, the LV runs much faster than DP for the same instance, but we found that the output value of them are surprisingly close, see Fig. \ref{1}.

%
\begin{figure}[h]
\begin{center}
\begin{tikzpicture}[scale=0.7]
  \begin{axis}[ 
    xlabel={Instance number},
    ylabel={Running time (s)},
    xmin=1, xmax=35,
    ymin=0, ymax=160,
    ytick={0,10,20,30,40,50,60,70,80,90,100,110,120,130,140,150},
    legend columns=3,
  ]

\addplot[color=blue,
                mark options={solid},
                smooth]  	
coordinates{ 
(	1	,	6.901508093	)
(	2	,	4.813677073	)
(	3	,	5.741599798	)
(	4	,	7.115626097	)
(	5	,	7.712809086	)
(	6	,	5.968748093	)
(	7	,	5.212610006	)
(	8	,	5.500578165	)
(	9	,	6.208392143	)
(	10	,	6.240756035	)
(	11	,	7.431531191	)
(	12	,	7.112971067	)
(	13	,	6.933593988	)
(	14	,	6.347028017	)
(	15	,	6.521520138	)
(	16	,	5.583968163	)
(	17	,	6.746335983	)
(	18	,	6.607235909	)
(	19	,	5.230648041	)
(	20	,	6.157793045	)
(	21	,	6.192298889	)
(	22	,	6.905430794	)
(	23	,	6.308185101	)
(	24	,	5.519074917	)
(	25	,	6.292862892	)
(	26	,	5.700720787	)
(	27	,	5.05667305	)
(	28	,	6.192816019	)
(	29	,	8.145206928	)
(	30	,	6.581865788	)
(	31	,	5.724907875	)
(	32	,	5.713963985	)
(	33	,	5.75338006	)
(	34	,	5.861660004	)
(	35	,	6.799412012	)};
\addlegendentry{LV} 
\addplot[color=red,
                mark options={solid},
                smooth]  	
coordinates{ 
(	1	,	134.9827189	)
(	2	,	91.9416132	)
(	3	,	102.933243	)
(	4	,	134.5345271	)
(	5	,	124.8321061	)
(	6	,	98.85265398	)
(	7	,	95.8620379	)
(	8	,	110.8394539	)
(	9	,	93.67440295	)
(	10	,	106.724118	)
(	11	,	142.9506669	)
(	12	,	132.0647731	)
(	13	,	124.1769969	)
(	14	,	119.7768691	)
(	15	,	125.0897682	)
(	16	,	113.65839	)
(	17	,	131.9591119	)
(	18	,	99.59849405	)
(	19	,	98.94624519	)
(	20	,	117.575985	)
(	21	,	119.9141059	)
(	22	,	139.8617671	)
(	23	,	125.1433349	)
(	24	,	108.090009	)
(	25	,	117.1657431	)
(	26	,	113.5541859	)
(	27	,	90.94951797	)
(	28	,	120.0865619	)
(	29	,	126.2797122	)
(	30	,	132.793746	)
(	31	,	111.0662482	)
(	32	,	103.2034149	)
(	33	,	99.35886502	)
(	34	,	110.2726331	)
(	35	,	134.957993	)};
\addlegendentry{DP}

  \end{axis}
\end{tikzpicture}

\begin{tikzpicture}[scale=0.7]
  \begin{axis}[ 
    xlabel={Instance number},
    ylabel={Cost of the solution},
    xmin=1, xmax=35,
    ymin=0, ymax=6000,
    legend columns=3,
  ]

\addplot[color=blue,
                mark options={solid},
                smooth]  	
coordinates{ 
(	1	,	2286.858011	)
(	2	,	3073.69825	)
(	3	,	2535.387696	)
(	4	,	1241.251067	)
(	5	,	801.8604224	)
(	6	,	2802.767142	)
(	7	,	2054.813368	)
(	8	,	4325.155842	)
(	9	,	3195.666809	)
(	10	,	2592.672771	)
(	11	,	2656.893595	)
(	12	,	2606.414984	)
(	13	,	1665.950688	)
(	14	,	3138.438343	)
(	15	,	2099.597211	)
(	16	,	2422.229866	)
(	17	,	1976.097412	)
(	18	,	1744.328846	)
(	19	,	1944.722739	)
(	20	,	2726.75238	)
(	21	,	1022.980739	)
(	22	,	1445.85999	)
(	23	,	1782.914953	)
(	24	,	2495.117651	)
(	25	,	1918.474781	)
(	26	,	1904.785109	)
(	27	,	5254.659929	)
(	28	,	1597.002596	)
(	29	,	946.372548	)
(	30	,	812.3697595	)
(	31	,	1547.253012	)
(	32	,	1515.830364	)
(	33	,	3531.610686	)
(	34	,	3357.875256	)
(	35	,	3138.210536	)};
\addlegendentry{ LV } 
\addplot[color=red,
                mark options={solid},
                smooth]  	
coordinates{ 
(	1	,	1757.282847	)
(	2	,	3125.390806	)
(	3	,	2350.077361	)
(	4	,	1179.574947	)
(	5	,	1433.507912	)
(	6	,	2124.772063	)
(	7	,	2082.664002	)
(	8	,	3727.162796	)
(	9	,	2037.243177	)
(	10	,	2159.237625	)
(	11	,	3172.294232	)
(	12	,	2412.350397	)
(	13	,	2063.958737	)
(	14	,	2267.550966	)
(	15	,	1919.744641	)
(	16	,	1927.007074	)
(	17	,	1244.040438	)
(	18	,	1558.253527	)
(	19	,	1792.264432	)
(	20	,	2173.438057	)
(	21	,	1141.744178	)
(	22	,	1649.318766	)
(	23	,	2059.649135	)
(	24	,	2644.528674	)
(	25	,	1681.853395	)
(	26	,	2230.26741	)
(	27	,	5080.140935	)
(	28	,	2105.170867	)
(	29	,	1167.745489	)
(	30	,	1128.877857	)
(	31	,	1481.861519	)
(	32	,	1467.880358	)
(	33	,	2185.393321	)
(	34	,	2323.920455	)
(	35	,	3138.012507	)};
\addlegendentry{ DP }

  \end{axis}
\end{tikzpicture}

\end{center}

  \caption{Experiment 1}\label{1}
\end{figure}
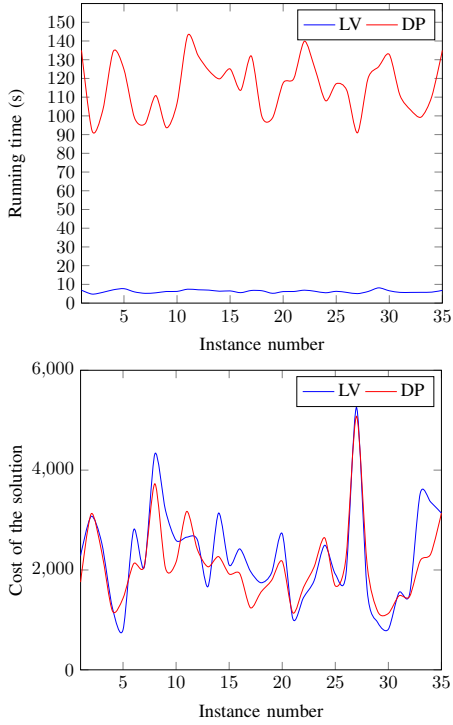

\textbf{2) DP vs. LV expected values (of 30 runs) for 30 random instances}
In case that the result 1) happens accidentally, we construct experiment 2). For the same setting, we generate another 30 instances, and for each instance, we run DP once and LV 30 times. In this way, we compare the expected value for LV of 30 runs with the output value of DP. Besides, we make a ratio between the cost of LV with that of DP to see how close in each instance they are. See Fig. \ref{2}, the recorded maximum ratio of the performance is 2.3057 in instance 13, and the minimum is 0.8858 in instance 24, which means LV performs even better than DP in this instance.

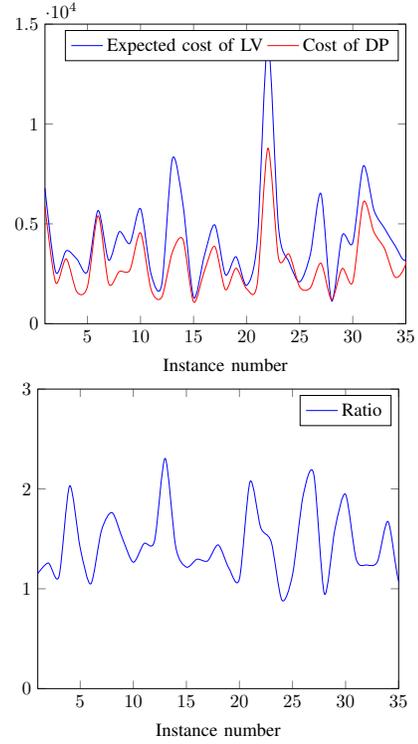
\begin{figure}[h]
\begin{center}
\begin{tikzpicture}[scale=0.7]
  \begin{axis}[ 
    xlabel={Instance number},
    xmin=1, xmax=35,
    ymin=0, ymax=15000,
    legend columns=3,
  ]

\addplot[color=blue,
                mark options={solid},
                smooth]  	
coordinates{ 
(	1	,	6777.14578	)
(	2	,	2603.758384	)
(	3	,	3651.217161	)
(	4	,	3238.289698	)
(	5	,	2611.179469	)
(	6	,	5670.353036	)
(	7	,	3184.694363	)
(	8	,	4604.783059	)
(	9	,	4023.950621	)
(	10	,	5757.807445	)
(	11	,	2474.712357	)
(	12	,	1961.338533	)
(	13	,	8249.202022	)
(	14	,	5973.869589	)
(	15	,	1329.336714	)
(	16	,	3371.155367	)
(	17	,	4947.645434	)
(	18	,	2454.185092	)
(	19	,	3352.958356	)
(	20	,	1920.709856	)
(	21	,	4146.72481	)
(	22	,	14181.71963	)
(	23	,	4843.163834	)
(	24	,	3081.077933	)
(	25	,	2097.001326	)
(	26	,	3470.811257	)
(	27	,	6510.457126	)
(	28	,	1149.784092	)
(	29	,	4421.606231	)
(	30	,	4061.935532	)
(	31	,	7900.436162	)
(	32	,	5677.953368	)
(	33	,	4768.538543	)
(	34	,	3897.234702	)
(	35	,	3237.292674	)
(	36	,	5813.043304	)};
\addlegendentry{ Expected cost of LV } 
\addplot[color=red,
                mark options={solid},
                smooth]  	
coordinates{ 
(	1	,	5881.663485	)
(	2	,	2069.800208	)
(	3	,	3248.21601	)
(	4	,	1595.327266	)
(	5	,	1855.833015	)
(	6	,	5402.069349	)
(	7	,	2010.880288	)
(	8	,	2612.385486	)
(	9	,	2687.58251	)
(	10	,	4549.483136	)
(	11	,	1703.701497	)
(	12	,	1325.080052	)
(	13	,	3577.764511	)
(	14	,	4214.955723	)
(	15	,	1092.321476	)
(	16	,	2601.279385	)
(	17	,	3868.025423	)
(	18	,	1705.339275	)
(	19	,	2777.932773	)
(	20	,	1746.046198	)
(	21	,	2000.833418	)
(	22	,	8787.61937	)
(	23	,	3301.906499	)
(	24	,	3478.455521	)
(	25	,	1838.288337	)
(	26	,	1795.587287	)
(	27	,	3030.097745	)
(	28	,	1209.214523	)
(	29	,	2765.114041	)
(	30	,	2087.927903	)
(	31	,	6092.772432	)
(	32	,	4584.676625	)
(	33	,	3748.189892	)
(	34	,	2328.579467	)
(	35	,	2998.86413	)
(	36	,	5464.365879	)};
\addlegendentry{ Cost of DP }

  \end{axis}
\end{tikzpicture}

\begin{tikzpicture}[scale=0.7]
  \begin{axis}[ 
    xlabel={Instance number},
    xmin=1, xmax=35,
    ymin=0, ymax=3,
    legend columns=3,
  ]

\addplot[color=blue,
                mark options={solid},
                smooth]  	
coordinates{ 
(	1	,	1.152249835	)
(	2	,	1.257975709	)
(	3	,	1.124068458	)
(	4	,	2.029859181	)
(	5	,	1.407012079	)
(	6	,	1.049663133	)
(	7	,	1.583731454	)
(	8	,	1.76267365	)
(	9	,	1.497237984	)
(	10	,	1.265595953	)
(	11	,	1.452550439	)
(	12	,	1.480166071	)
(	13	,	2.305686133	)
(	14	,	1.417303047	)
(	15	,	1.216983043	)
(	16	,	1.295960513	)
(	17	,	1.27911399	)
(	18	,	1.439118378	)
(	19	,	1.20699766	)
(	20	,	1.100033812	)
(	21	,	2.072498777	)
(	22	,	1.613829529	)
(	23	,	1.466778007	)
(	24	,	0.885760336	)
(	25	,	1.140735805	)
(	26	,	1.932967159	)
(	27	,	2.148596406	)
(	28	,	0.950852037	)
(	29	,	1.599068308	)
(	30	,	1.945438597	)
(	31	,	1.296689849	)
(	32	,	1.238463218	)
(	33	,	1.272224375	)
(	34	,	1.673653297	)
(	35	,	1.079506284	)
(	36	,	1.063809312	)};
\addlegendentry{ Ratio }

  \end{axis}
\end{tikzpicture}

\end{center}

  \caption{Experiment 2}\label{2}
\end{figure}


 \textbf{3) LV vs. MC w.r.t. running time}
As mentioned, it makes no sense to compare the output value of LV with that of MC. We construct this experiment mainly to see the running time performance of LV and MC for variant sizes of the inputs. As time-complexity analysis for the proposed algorithms says, the randomized algorithm performs much better than the deterministic one in terms of the running time. Thus in this experiment, we allow the inputs to be much larger than that in the previous two experiments. Particularly, we build three different types of instances:

\textbf{A-type}: 50 nodes, 100 flows with maximum demand 200;

\textbf{B-type}: 50 nodes, 500 flows with maximum demand 200;

\textbf{C-type}: 50 nodes, 500 flows with maximum demand 1000.

For each type, we randomly generate 30 instances and for each instance, run LV and MC once respectively. See Fig. \ref{3}, for the A-type instances, the running time of the LV and MC are very close. However, with the instances grow larger and larger, the running time of them starts to separate. And for C-type instances, MC outperforms LV in terms of running time.

\begin{figure}[h]
\begin{center}
\begin{tikzpicture}[scale=0.7]
  \begin{axis}[ 
    xlabel={A-type instance number},
    ylabel={Running time (s)},
    xmin=1, xmax=30,
    ymin=0,
    legend columns=3,
  ]

\addplot[color=blue,
                mark options={solid},
                smooth]  	
coordinates{ 
(	1	,	2.24683094	)
(	2	,	1.948189974	)
(	3	,	1.917916059	)
(	4	,	2.566096067	)
(	5	,	1.798512936	)
(	6	,	2.465445995	)
(	7	,	2.120037079	)
(	8	,	2.429563046	)
(	9	,	2.389698982	)
(	10	,	2.259896994	)
(	11	,	2.225565195	)
(	12	,	2.1711061	)
(	13	,	2.244111061	)
(	14	,	1.962957144	)
(	15	,	2.376837015	)
(	16	,	2.023880959	)
(	17	,	2.31703186	)
(	18	,	2.487796068	)
(	19	,	2.171406031	)
(	20	,	2.959280014	)
(	21	,	1.975306034	)
(	22	,	1.825998068	)
(	23	,	2.249426842	)
(	24	,	2.710040092	)
(	25	,	2.060498953	)
(	26	,	1.595869064	)
(	27	,	2.349334955	)
(	28	,	2.158725023	)
(	29	,	2.110038042	)
(	30	,	2.085901022	)};
\addlegendentry{ LV } 
\addplot[color=red,
                mark options={solid},
                smooth]  	
coordinates{ 
(	1	,	2.150655031	)
(	2	,	1.780858994	)
(	3	,	1.97309804	)
(	4	,	1.95549798	)
(	5	,	1.825937033	)
(	6	,	2.030739784	)
(	7	,	2.016355991	)
(	8	,	2.08344698	)
(	9	,	2.071329832	)
(	10	,	1.823652983	)
(	11	,	1.921290874	)
(	12	,	2.071250916	)
(	13	,	1.988641024	)
(	14	,	1.882141113	)
(	15	,	1.808807135	)
(	16	,	1.924365044	)
(	17	,	1.93894887	)
(	18	,	2.032951117	)
(	19	,	1.717208862	)
(	20	,	2.172821999	)
(	21	,	1.97200799	)
(	22	,	1.493100882	)
(	23	,	1.919817925	)
(	24	,	2.01733613	)
(	25	,	1.873551846	)
(	26	,	1.646770954	)
(	27	,	1.915464878	)
(	28	,	1.991046906	)
(	29	,	2.180083036	)
(	30	,	1.618979931	)};
\addlegendentry{ MC }

  \end{axis}
\end{tikzpicture}

\begin{tikzpicture}[scale=0.7]
  \begin{axis}[ 
    xlabel={B-type instance number},
    ylabel={Running time (s)},
    xmin=1, xmax=30,
    ymin=0, 
    legend columns=3,
  ]

\addplot[color=blue,
                mark options={solid},
                smooth]  	
coordinates{ 
(	1	,	12.25847507	)
(	2	,	12.86228299	)
(	3	,	13.39141202	)
(	4	,	13.91389704	)
(	5	,	12.31361604	)
(	6	,	13.69489598	)
(	7	,	13.24568486	)
(	8	,	14.53527403	)
(	9	,	13.38440108	)
(	10	,	12.94687486	)
(	11	,	14.01335001	)
(	12	,	13.61895514	)
(	13	,	13.82609391	)
(	14	,	14.93189096	)
(	15	,	13.77835989	)
(	16	,	11.80423498	)
(	17	,	14.25587201	)
(	18	,	12.00384307	)
(	19	,	12.79927015	)
(	20	,	14.22226882	)
(	21	,	13.0552671	)
(	22	,	12.74454999	)
(	23	,	14.53364897	)
(	24	,	11.43644905	)
(	25	,	12.16950202	)
(	26	,	13.41387415	)
(	27	,	13.52536511	)
(	28	,	11.62638521	)
(	29	,	14.78723502	)
(	30	,	13.95087194	)};
\addlegendentry{ LV } 

\addplot[color=red,
                mark options={solid},
                smooth]  	
coordinates{ 
(	1	,	10.27765799	)
(	2	,	10.11974382	)
(	3	,	12.64932013	)
(	4	,	11.95867991	)
(	5	,	10.06269503	)
(	6	,	12.39790201	)
(	7	,	11.98604107	)
(	8	,	11.27508092	)
(	9	,	12.03588915	)
(	10	,	10.44616199	)
(	11	,	11.63688397	)
(	12	,	12.44334388	)
(	13	,	12.28906989	)
(	14	,	11.72954416	)
(	15	,	12.74911618	)
(	16	,	10.54983592	)
(	17	,	13.30214405	)
(	18	,	10.26099205	)
(	19	,	11.76380897	)
(	20	,	12.25104785	)
(	21	,	11.33705902	)
(	22	,	10.8460598	)
(	23	,	12.00582409	)
(	24	,	9.726355076	)
(	25	,	10.93572903	)
(	26	,	11.65464306	)
(	27	,	11.59512401	)
(	28	,	9.821372986	)
(	29	,	11.73369884	)
(	30	,	11.4103539	)};
\addlegendentry{ MC } 

  \end{axis}
\end{tikzpicture}

\begin{tikzpicture}[scale=0.7]
  \begin{axis}[ 
    xlabel={C-type instance number},
    ylabel={Running time (s)},
    xmin=1, xmax=30,
    ymin=0, 
    legend columns=3,
  ]

\addplot[color=blue,
                mark options={solid},
                smooth]  	
coordinates{ 
(	1	,	175.7262502	)
(	2	,	202.2482071	)
(	3	,	183.1258421	)
(	4	,	196.2584569	)
(	5	,	197.5780962	)
(	6	,	152.878696	)
(	7	,	218.34763	)
(	8	,	169.007982	)
(	9	,	146.9295151	)
(	10	,	192.3621461	)
(	11	,	153.5322959	)
(	12	,	114.5563841	)
(	13	,	165.2379441	)
(	14	,	173.977824	)
(	15	,	182.460006	)
(	16	,	164.963088	)
(	17	,	150.7730699	)
(	18	,	208.9440632	)
(	19	,	128.3400991	)
(	20	,	214.096565	)
(	21	,	250.775264	)
(	22	,	162.3928468	)
(	23	,	170.02074	)
(	24	,	158.2389922	)
(	25	,	190.4540238	)
(	26	,	189.0502369	)
(	27	,	157.0763788	)
(	28	,	185.7077892	)
(	29	,	160.808255	)
(	30	,	193.8924742	)};
\addlegendentry{ LV } 

\addplot[color=red,
                mark options={solid},
                smooth]  	
coordinates{ 
(	1	,	80.41347098	)
(	2	,	94.51186705	)
(	3	,	88.26558805	)
(	4	,	101.8878391	)
(	5	,	85.83175015	)
(	6	,	78.92328215	)
(	7	,	83.21003294	)
(	8	,	84.43478513	)
(	9	,	76.80827403	)
(	10	,	85.75872993	)
(	11	,	83.45407701	)
(	12	,	72.76563001	)
(	13	,	92.80385995	)
(	14	,	86.55959082	)
(	15	,	80.61918402	)
(	16	,	87.78751898	)
(	17	,	88.01257396	)
(	18	,	88.34128308	)
(	19	,	89.63877487	)
(	20	,	91.78777313	)
(	21	,	92.81946301	)
(	22	,	84.10102391	)
(	23	,	78.570889	)
(	24	,	84.73576188	)
(	25	,	88.68194699	)
(	26	,	85.98008895	)
(	27	,	88.55339408	)
(	28	,	91.74237108	)
(	29	,	85.11517096	)
(	30	,	85.06592512	)};
\addlegendentry{ MC }

  \end{axis}
\end{tikzpicture}

\end{center}

  \caption{Experiment 3}\label{3}
\end{figure}
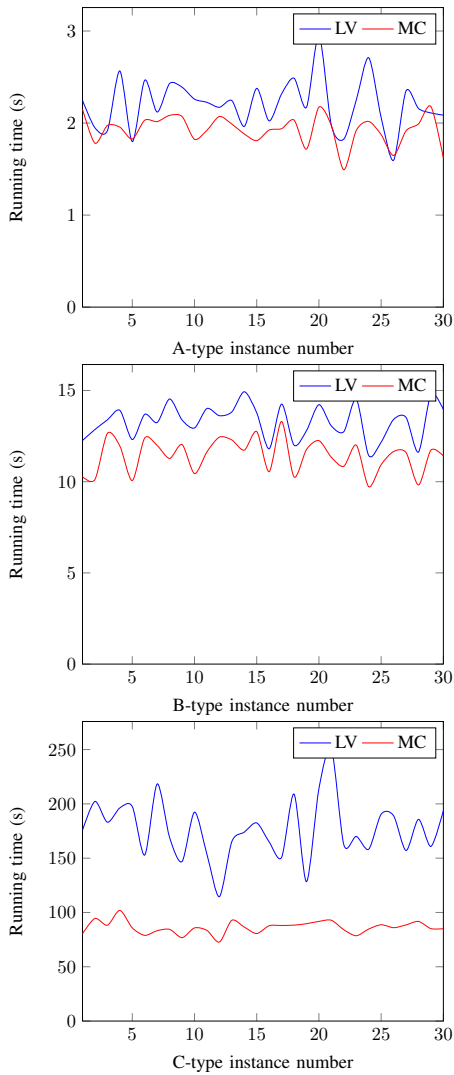

%
%

\textbf{4) $r$ vs. FAIL rate in MC}
In subsection B of section V, we roughly discuss the impact of the rounds of sampling in MC. And in order to have better performance, we set the number of rounds be proportional with the flow demand as $\lceil \frac{d_j}{r}\rceil$. In this experiment, we are able to see the relation between FAIL rate with the setting of $r$. Note the larger $r$ is, the smaller rounds of sampling we have, and thus the higher FAIL rate may occur. Particularly, the instances are randomly generated under settings 50 nodes, 500 flows with maximum demand 1000. And $r$ grows from 5 to 50 with step length 5. Results see Fig. \ref{4}.

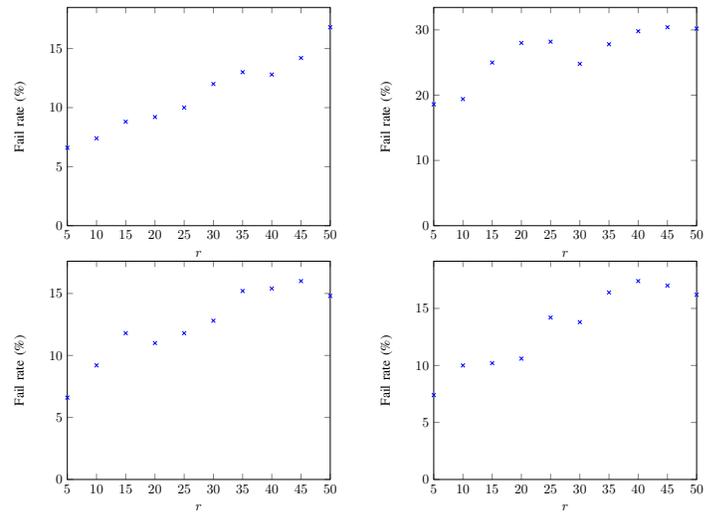
\begin{figure}[h]

\begin{minipage}{0.45\linewidth}

\begin{tikzpicture}[scale=0.51]
  \begin{axis}[ 
    xlabel={$r$},
    ylabel={Fail rate (\%)},
    xmin=5, xmax=50,
    ymin=0,
    xtick={5,10,15,20,25,30,35,40,45,50},
    legend columns=3,
  ]

\addplot[color=blue,
		only marks,
                mark=cross,
		mark=x,
                mark options={solid},
]  	
coordinates{ 
(	5	,	6.6	)
(	10	,	7.4	)
(	15	,	8.8	)
(	20	,	9.2	)
(	25	,	10	)
(	30	,	12	)
(	35	,	13	)
(	40	,	12.8	)
(	45	,	14.2	)
(	50	,	16.8	)};

  \end{axis}
\end{tikzpicture}

\begin{tikzpicture}[scale=0.51]
  \begin{axis}[ 
    xlabel={$r$},
    ylabel={Fail rate (\%)},
    xmin=5, xmax=50,
    ymin=0,
    xtick={5,10,15,20,25,30,35,40,45,50},
    legend columns=3,
  ]

\addplot[color=blue,
		only marks,
                mark=cross,
		mark=x,
                mark options={solid},
]  	
coordinates{ 
(	5	,	6.6	)
(	10	,	9.2	)
(	15	,	11.8	)
(	20	,	11	)
(	25	,	11.8	)
(	30	,	12.8	)
(	35	,	15.2	)
(	40	,	15.4	)
(	45	,	16	)
(	50	,	14.8	)};

  \end{axis}
\end{tikzpicture}
\end{minipage}
\hfill
\begin{minipage}{0.45\linewidth}
\begin{tikzpicture}[scale=0.51]
  \begin{axis}[ 
    xlabel={$r$},
    ylabel={Fail rate (\%)},
    xmin=5, xmax=50,
    ymin=0,
    xtick={5,10,15,20,25,30,35,40,45,50},
    legend columns=3,
  ]

\addplot[color=blue,
		only marks,
                mark=cross,
		mark=x,
                mark options={solid},
]  	
coordinates{ 
(	5	,	18.6	)
(	10	,	19.4	)
(	15	,	25	)
(	20	,	28	)
(	25	,	28.2	)
(	30	,	24.8	)
(	35	,	27.8	)
(	40	,	29.8	)
(	45	,	30.4	)
(	50	,	30.2	)};
  \end{axis}
\end{tikzpicture}

\begin{tikzpicture}[scale=0.51]
  \begin{axis}[ 
    xlabel={$r$},
    ylabel={Fail rate (\%)},
    xmin=5, xmax=50,
    ymin=0,
    xtick={5,10,15,20,25,30,35,40,45,50},
    legend columns=3,
  ]

\addplot[color=blue,
		only marks,
                mark=cross,
		mark=x,
                mark options={solid},
]  	
coordinates{ 
(	5	,	7.4	)
(	10	,	10	)
(	15	,	10.2	)
(	20	,	10.6	)
(	25	,	14.2	)
(	30	,	13.8	)
(	35	,	16.4	)
(	40	,	17.4	)
(	45	,	17	)
(	50	,	16.2	)};

  \end{axis}
\end{tikzpicture}
\end{minipage}

  \caption{Experiment 4}\label{4}
\end{figure}


\textbf{Summary}
From the numerical experiments, we have a better knowledge of the proposed algorithms. We prove that by employing LV randomized algorithm, we do not lose too much of the optimality compared with the provable efficient deterministic algorithm, but we dramatically reduce the time and space complexity. If the Internet Service Providers are allowed to fail to serve some flow requirements, then MC is a good choice because the running time of MC outperforms LV when the data grow large. Also, we show that the fail rate is controllable by setting large rounds of sampling. All the experiment results show the efficiency of the proposed randomized heuristic algorithms for OJPA-HS.

\section{Conclusion}
In this paper, we introduce the novel capability function in JPA-VNF to measure the potential of a virtual server for locating VNF instances and propose the OJPA-HS. OJPA-HS model allows the servers in the network to be heterogeneous, at the same time combines and generalizes many classical JPA-VNF models. However, this model is proved to be NP-hard and even no efficient deterministic online algorithm exists. Under the hardness, we propose a provable best-possible deterministic algorithm based on dynamic programming. Besides, we propose another two randomized heuristic algorithms, LV and MC, both of which conquer the shortcoming of DP for the high complexity. We show by experiments that LV performs surprisingly close with DP in terms of the output value, with much shorter running time. And a comparison between LV and MC shows that MC outperforms LV in running time for large instances. However, MC not always outputs a feasible solution, which means it is only useful when the ISP is allowed to fail to serve some requirements. But the good news is, the fail rate is controllable by carefully setting a certain parameter in MC.

\bibliographystyle{plain}

\end{document}